\def\Cal{\mathcal}
\newcommand{\cut}{{\mathop{\rm cutset}}}
\newcommand{\dif}{{\mathop{\rm dif}}}
\newcommand{\ch}{{\mathop{c}}}
\newtheorem{claimm}{Claim}
\newcommand{\maybeqed}{\qed}
\begin{document}

 \mainmatter
 
 \title{On the hardness of switching to a small number of edges}
 \author{V\' it Jel\' inek\inst{1}\thanks{Supported by CE-ITI project GACR
P202/12/G061.}
 \and Eva Jel\'\i nkov\' a\inst{2}
 \and Jan Kratochv\'\i l\inst{2}${}^\star$}
 
 \institute{Computer Science Institute\\
 and
 \and
 Department of Applied Mathematics\\
 Faculty of Mathematics and Physics, Charles University\\ 
 Malostransk\' e n\' am. 25, 118~00 Praha, Czech Republic\\
 \email{jelinek@iuuk.mff.cuni.cz, \char123 eva,honza\char125@kam.mff.cuni.cz}}
\authorrunning{V. Jel\' inek, E. Jel\'\i nkov\' a, and J. Kratochv\'\i l}

\maketitle

\begin{abstract}
Seidel's switching is a graph operation which makes a
given vertex adjacent to precisely those vertices to which it was non-adjacent
before, while keeping the rest of the graph unchanged. Two graphs are called
switching-equivalent if one can be made isomorphic to the other one by a sequence
of switches.

Jel{\'\i}nkov{\' a} et al. [DMTCS 13, no. 2, 2011]
presented a proof that it is NP-complete to decide if the input graph can be switched to contain
at most a given number of edges. There turns out to be a flaw in their proof. We
present a correct proof.

Furthermore, we prove that the problem remains NP-complete even when restricted
to graphs whose density is bounded from above by an arbitrary fixed constant.
This partially answers a question of Matou\v{s}ek and Wagner [Discrete Comput. Geom. 52,
no.~1, 2014].



\noindent {\bf Keywords:} Seidel's switching, Computational complexity, Graph density,
Switching-minimal graphs, NP-completeness
\end{abstract}

\section{Introduction} \label{section:intro}

Seidel's switching is a graph operation which makes a
given vertex adjacent to precisely those vertices to which it was non-adjacent
before, while keeping the rest of the graph unchanged. Two graphs are called
switching-equivalent if one can be made isomorphic to the other one by a sequence
of switches. The class of graphs that are pairwise switching-equivalent is
called a switching class.

Hage in his PhD thesis~\cite[p.~115, Problem~8.5]{Hage01} posed the problem to characterize
the graphs that have the maximum (or minimum) number of edges in their switching class.
We call such graphs \emph{switching-maximal} and \emph{switching-minimal}, respectively.

Some properties of switching-maximal graphs were studied
by Kozerenko~\cite{Kozerenko15}. He proved that any graph with sufficiently
large minimum degree is switching-maximal, and that the join of certain graphs
is switching-maximal.
Further, he gave a characterization of triangle-free switching-maximal graphs
and of non-hamiltonian switching-maximal graphs.

It is easy to observe that a graph is switching-maximal if and only if its
complement is switching-minimal. We call the problem to decide if a graph is switching-minimal
\textsc{Switch-Minimal}.

Jel{\'\i}nkov{\' a} et al.~\cite{JSHK11-param} studied the more general problem
\textsc{Switch-Few-Edges} -- the problem of deciding
if a graph can be switched to contain at most a certain number of edges.
They presented a proof that the problem is NP-complete. Unfortunately, their proof is not correct.
Specifically, Lemma 4.3 of~\cite{JSHK11-param}, which claims to establish a
reduction from the classical \textsc{Max-Cut} problem to
\textsc{Switch-Few-Edges}, is false. The claim of the lemma fails, e.~g., on a
graph $G$ formed by two disjoint cliques of the same size.

In this paper, we provide a different proof of the NP-hardness of
\textsc{Switch-Few-Edges}, based on a reduction from a restricted
version of \textsc{Max-Cut}. 
Furthermore, we strengthen this result by proving that for any $c > 0$, 
\textsc{Switch-Few-Edges} is NP-complete even
if we require that the input graph has density at most~$c$.
We also prove that if the problem \textsc{Switch-Minimal} is co-NP-complete, then for
any $c > 0$, the problem is co-NP-complete even on graphs with density at most~$c$.

We thus partially answer a question of Matou\v sek and
Wagner~\cite{MW14} posed in connection with properties of simplicial complexes
-- they asked if deciding switching-minimality was easy for graphs of bounded density.
Our results also indicate that it might be unlikely to get an easy
characterization of switching-minimal (or switching-maximal) graphs, which
contributes to understanding Hage's question~\cite{Hage01}.



\subsection{Formal definitions and previous results}
\label{subs:form}

Let $G$ be a graph. Then the \emph{Seidel's switch of a vertex subset} $A
\subseteq V(G)$ is denoted by $S(G,A)$ and is defined by
\[S(G,A) = (V(G), E(G) \bigtriangleup \{ xy : x \in A,\ y \in V(G) \setminus A
\} ).
\]

\noindent
It is the graph obtained from $G$ by consecutive switching of the vertices of $A$.

We say that two graphs $G$ and $H$ are \emph{switching-equivalent} (denoted by
$G \sim H$) if there is a
set $A \subseteq V(G)$ such that $S(G,A)$ is isomorphic to $H$.
The set
$[G] = \{ S(G,A) : A \subseteq V(G) \}$ is called the \emph{switching class}
of $G$.

We say that a graph $G$ is \emph{$(\leq k)$-switchable} if there is a set $A
\subseteq V(G)$ such that $S(G,A)$ contains at most $k$ edges. Analogously, a
graph $G$ is \emph{$(\geq k)$-switchable} if there is a set $A
\subseteq V(G)$ such that $S(G,A)$ contains at least $k$ edges.

It is easy to observe that a graph $G$ is $(\leq k)$-switchable if and only if
the complement $\overline{G}$ is $\left(\geq \binom{n}{2} -
k\right)$-switchable.
We may, therefore, focus on $(\leq k)$-switchability only.

We examine the following problems.

\newdimen\sirka
\sirka=-\fboxsep \advance\sirka by -\fboxrule \multiply\sirka by 2
\advance\sirka by \hsize
\vskip\fboxsep
\vskip\fboxsep
\noindent
\framebox[\hsize][l]{\begin{minipage}{\sirka}
\textsc{Switch-Few-Edges}\\
 \textbf{Input:} A graph $G=(V,E)$, an integer $k$\\
 \textbf{Question:} Is $G$ $(\leq k)$-switchable?
 \end{minipage}}
\vskip\fboxsep

\vskip\fboxsep
\noindent
\framebox[\hsize][l]{\begin{minipage}{\sirka}
\textsc{Switch-Minimal}\\
 \textbf{Input:} A graph $G=(V,E)$\\
 \textbf{Question:} Is $G$ switching-minimal?
 \end{minipage}}
\vskip\fboxsep
\vskip\fboxsep

We say that a graph is \emph{switching-reducible} if $G$ is \emph{not}
switching-minimal, in other words, if there is a set $A
\subseteq V(G)$ such that $S(G,A)$ contains fewer edges than~$G$.
For further convenience, we also define the problem \textsc{Switch-Reducible}.

\vskip\fboxsep
\vskip\fboxsep
\noindent
\framebox[\hsize][l]{\begin{minipage}{\sirka}
\textsc{Switch-Reducible}\\
 \textbf{Input:} A graph $G=(V,E)$\\
 \textbf{Question:} Is $G$ switching-reducible?
 \end{minipage}}
\vskip\fboxsep
\vskip\fboxsep

Let $G = (V,E)$ be a graph. We say that a partition $V_1$, $V_2$ of $V$ is a
\emph{cut} of $G$. For a cut $V_1$, $V_2$, the set of edges that have exactly one end-vertex
in $V_1$ is denoted by $\cut(V_1)$, and the edges of $\cut(V_1)$ are called
\emph{cut-edges}.
When there is no danger of confusion, we also say that a single subset $V_1 \subseteq V$ is a cut
(meaning the partition $V_1$, $V\setminus V_1$).

\subsection{Easy cases}
\label{subs:easy}

In this subsection we present several results about easy special cases of the
problems that we focus on. This complements our hardness results.

The following theorem was proved by Ehrenfeucht et~al.~\cite{EHHR00-issues} and
also independently (in a slightly weaker form) by Kratochv\'\i l~\cite{Kratochvil03}.

\begin{theorem}\label{thm:ehhr-reg}
Let $\Cal P$ be a graph property that can be decided in time $\Cal O(n^a)$ for an
integer $a$. Let every graph with $\Cal P$ contain a vertex of degree at most
$d(n)$. Then the problem if an input graph is switching-equivalent to a graph with
$\Cal P$ can be decided in time $\Cal O(n^{d(n) + 1 + \max(a,2)})$.
\end{theorem}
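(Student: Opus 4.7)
The plan is to exploit the low-degree vertex guaranteed by $\Cal P$ together with the following key reconstruction fact: if we fix one vertex~$v$ and require $v\notin A$, then the switching set $A\subseteq V(G)\setminus\{v\}$ is uniquely determined by the neighborhood of $v$ in the switched graph $S(G,A)$. Indeed, since $v\notin A$, a vertex $u\ne v$ is adjacent to $v$ in $S(G,A)$ iff either $u\in N_G(v)\setminus A$ or $u\in A\setminus N_G(v)$. So if we demand $N_{S(G,A)}(v)=D$ for a given $D\subseteq V\setminus\{v\}$, we are forced to take $A=N_G(v)\bigtriangleup D$. Because $S(G,A)=S(G,V\setminus A)$, requiring $v\notin A$ loses no generality, so every graph in $[G]$ is obtained by choosing some $v$ and some target neighborhood $D$ for $v$.

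With this in hand, the algorithm is immediate. For each vertex $v\in V(G)$ and for each set $D\subseteq V(G)\setminus\{v\}$ with $|D|\le d(n)$, compute $A:=N_G(v)\bigtriangleup D$, form $S(G,A)$, and test whether $S(G,A)$ has property $\Cal P$. Answer \emph{yes} iff some pair $(v,D)$ succeeds.

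For correctness, if $G\sim H$ for some $H$ with $\Cal P$, pick an $A$ with $S(G,A)\cong H$ and, replacing $A$ by $V\setminus A$ if necessary, assume $v\notin A$ where $v$ is a vertex of degree $\le d(n)$ in $S(G,A)$. Setting $D:=N_{S(G,A)}(v)$, the pair $(v,D)$ is enumerated and the test succeeds. Conversely, any accepting pair exhibits an explicit $A$ with $S(G,A)$ satisfying $\Cal P$, so $G$ is switching-equivalent to a graph with $\Cal P$.

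For the running time, there are $n$ choices of $v$ and $\sum_{i=0}^{d(n)}\binom{n-1}{i}=\Cal O(n^{d(n)})$ choices of $D$. For each pair, constructing $S(G,A)$ takes $\Cal O(n^2)$ time and testing $\Cal P$ takes $\Cal O(n^a)$ time, giving a total of $\Cal O\bigl(n^{d(n)+1}\cdot(n^2+n^a)\bigr)=\Cal O(n^{d(n)+1+\max(a,2)})$. The only conceptual step is the reconstruction observation above; everything else is bookkeeping, so I do not anticipate a real obstacle.
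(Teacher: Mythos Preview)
The paper does not give its own proof of this theorem; it merely quotes the result from Ehrenfeucht et~al.\ and Kratochv\'\i l. Your argument is correct and is exactly the standard proof from those references: fix a vertex $v$, observe that (after possibly replacing $A$ by its complement) the switching set is uniquely recovered from the target neighborhood of $v$ via $A=N_G(v)\bigtriangleup D$, and enumerate all $O(n^{d(n)})$ candidate neighborhoods $D$ of size at most $d(n)$ for each of the $n$ choices of~$v$.
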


The proof of Theorem~\ref{thm:ehhr-reg} also gives an algorithm that works in
the given time. Hence, it also provides an algorithm for
\textsc{Switch-Few-Edges}: in a graph with at most $k$ edges all vertex degrees
are bounded by $k$. Hence, we can use $d(n) = k$ and $a=2$ and get an $\Cal
O(n^{k+3})$-time algorithm. It was further proved by Jel\'\i nkov\' a et al.~\cite{JSHK11-param} that
\textsc{Switch-Few-Edges} is fixed-parameter tract\-able; it has a kernel with
$2k$ vertices, and there is an algorithm running in time $\Cal O(2.148^k \cdot n +m)$, where $m$ is the number
of edges of the input graph. In Section~\ref{sec:swfew-npc}, we provide a corrected
NP-completeness proof.

The following proposition states a basic relation of
switching-minimality and graph degrees.

\begin{proposition}[folklore]\label{pro:pul}
 Every switching-minimal graph $G=(V,E)$ on $n$ vertices has maximum degree at
most~$\lfloor (n-1)/2 \rfloor$.
\end{proposition}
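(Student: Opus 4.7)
The plan is to prove the contrapositive: if $G$ contains a vertex $v$ with $\deg(v) > \lfloor (n-1)/2 \rfloor$, then $G$ is not switching-minimal, because switching the singleton $A = \{v\}$ strictly decreases the number of edges.

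First I would unpack $S(G, \{v\})$ using the definition. Since the symmetric difference only touches pairs $xy$ with $x \in A = \{v\}$ and $y \in V \setminus A$, the only edges affected are those incident to $v$. All other edges of $G$ remain unchanged in $S(G, \{v\})$. In the switched graph, the neighbors of $v$ become the non-neighbors of $v$ in $G$, so the new degree of $v$ is $n - 1 - \deg(v)$.

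Next, I would compute the difference in edge counts. Let $e(G)$ denote the number of edges. Since only edges incident to $v$ change, we have
\[ e(S(G, \{v\})) - e(G) = (n - 1 - \deg(v)) - \deg(v) = n - 1 - 2\deg(v). \]
Under the hypothesis $\deg(v) > (n-1)/2$, this quantity is strictly negative, so $S(G, \{v\})$ has fewer edges than $G$. Hence $G$ is not switching-minimal, completing the proof.

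There is no real obstacle here; the only small subtlety is ensuring the inequality $\deg(v) > (n-1)/2$ is handled correctly for both parities of $n$, which is automatic from the floor in the statement. The proposition is essentially immediate from the single-vertex switch.
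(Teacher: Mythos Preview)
Your proof is correct and follows exactly the same approach as the paper: the contrapositive via a single-vertex switch, noting that $S(G,\{v\})$ has fewer edges than $G$ when $\deg(v) > \lfloor (n-1)/2 \rfloor$. The paper's proof is essentially a one-line version of what you wrote.
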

\begin{proof}
 Clearly, if $G$ contains a vertex $v$ of degree greater than $\lfloor (n-1)/2
\rfloor$, then $S(G,\{v\})$ has fewer edges than $G$, showing that $G$ is not
switching-minimal.
\maybeqed\end{proof}

We remark that for a given graph $G$ we can efficiently construct a switch
whose maximum degree is at most $\lfloor (n-1)/2 \rfloor$; one by one, we switch
vertices whose degree exceeds this bound (in this way, the number of edges is
decreased in each step). However, the graph
constructed by this procedure is not necessarily switching-minimal.

Let $e(A)$ denote the number of edges whose one vertex is in $A$ and the
other one in $V(G) \setminus A$. The next proposition is an equivalent
formulation of Lemma 2.5 of Kozerenko~\cite{Kozerenko15}, strengthening
Proposition~\ref{pro:pul}.

\begin{proposition}\label{prop:kozerenko}
A graph $G$ is switching-minimal if and only if for every $A \subseteq V(G)$, 
 we have
\begin{displaymath}
2e(A) \leq |A|(|V(G)| -  |A|).
\end{displaymath}
\end{proposition}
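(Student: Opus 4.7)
The plan is to prove the proposition by a direct edge-counting argument that compares $|E(G)|$ with $|E(S(G,A))|$ for an arbitrary subset $A \subseteq V(G)$. Since switching-minimality is defined in terms of the total number of edges across the switching class, the whole statement reduces to a single algebraic identity expressing the edge count after a switch.

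The key step is the following observation. Partition the edges of $G$ into edges inside $A$, edges inside $V\setminus A$, and cut-edges (which number $e(A)$). By the definition of $S(G,A)$, only the cut-edges are affected: each of the $e(A)$ cut-edges becomes a non-edge, and each of the $|A|(|V(G)|-|A|) - e(A)$ non-edges crossing the cut becomes an edge. Hence
\[
|E(S(G,A))| \;=\; |E(G)| \,-\, e(A) \,+\, \bigl(|A|(|V(G)|-|A|) - e(A)\bigr) \;=\; |E(G)| + |A|(|V(G)|-|A|) - 2e(A).
\]

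From this identity, $|E(S(G,A))| \geq |E(G)|$ if and only if $2e(A) \leq |A|(|V(G)|-|A|)$. Since $G$ is switching-minimal exactly when $|E(S(G,A))| \geq |E(G)|$ holds for every $A \subseteq V(G)$, the proposition follows immediately. There is no real obstacle here; once the edge-count formula is set up correctly, the proof is a one-line equivalence, and the only care required is to note that the identity holds for every subset $A$, including $A = \emptyset$ and $A = V(G)$, in which cases both sides vanish.
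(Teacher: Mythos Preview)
Your proof is correct: the edge-count identity
\[
|E(S(G,A))| = |E(G)| + |A|\bigl(|V(G)|-|A|\bigr) - 2e(A)
\]
is exactly right, and the equivalence with switching-minimality follows immediately. Note, however, that the paper does not actually supply its own proof of this proposition; it simply states the result as an equivalent reformulation of Lemma~2.5 of Kozerenko and cites that source. So there is no in-paper argument to compare against, but your direct counting argument is the standard and natural one, and it is complete as written.
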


We derive the following consequence.

\begin{proposition}\label{prop:maxdeg}
Let $G$ be a graph with $n$ vertices. If the maximum vertex degree in $G$ is at most
$\frac n 4$, then $G$ is switching-minimal.
\end{proposition}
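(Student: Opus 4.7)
The plan is to apply Proposition~\ref{prop:kozerenko} directly: I need to check that for every $A \subseteq V(G)$, the inequality $2e(A) \leq |A|(n - |A|)$ holds whenever the maximum degree $\Delta(G)$ is at most $n/4$.

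First I would bound $e(A)$ by counting cut-edges from each side. Every cut-edge has one endpoint in $A$ and one in $V \setminus A$, so
\begin{equation*}
e(A) \;\leq\; \sum_{v \in A} \deg(v) \;\leq\; |A| \cdot \Delta(G) \;\leq\; \frac{|A| \, n}{4},
\end{equation*}
and, symmetrically, $e(A) \leq (n - |A|) \, n/4$. Combining these,
\begin{equation*}
e(A) \;\leq\; \min(|A|,\, n - |A|) \cdot \frac{n}{4}.
\end{equation*}

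Next I would split into two symmetric cases. If $|A| \leq n/2$, then $n - |A| \geq n/2$, so the bound above gives $2 e(A) \leq |A| \cdot n/2 \leq |A|(n - |A|)$, which is exactly the condition of Proposition~\ref{prop:kozerenko}. If $|A| > n/2$, then $|A| \geq n/2$ and using the symmetric bound $e(A) \leq (n - |A|) n/4$ yields $2 e(A) \leq (n - |A|) \cdot n/2 \leq (n - |A|) \cdot |A|$. In both cases the required inequality holds, and Proposition~\ref{prop:kozerenko} then implies that $G$ is switching-minimal.

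I do not expect any real obstacle here; the proof is a one-line counting argument once the correct bound on $e(A)$ (using the smaller side of the cut) is written down. The only point worth care is remembering to take $\min(|A|, n-|A|)$ rather than just $|A|$, since using only $e(A) \leq |A|\,\Delta(G)$ would leave the case $|A| > n/2$ unresolved.
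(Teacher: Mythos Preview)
Your proof is correct and essentially identical to the paper's: both apply Proposition~\ref{prop:kozerenko} by bounding $e(A)\le \sum_{v\in A}\deg(v)\le |A|\,n/4$ on the smaller side of the cut. The only cosmetic difference is that the paper invokes the symmetry $e(A)=e(V\setminus A)$ to assume $|A|\le n/2$ without loss of generality, whereas you handle the two cases explicitly via $\min(|A|,n-|A|)$.
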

\begin{proof}
Let $A$ be any subset of $V(G)$. We observe that $e(A) = e(V(G) \setminus A)$;
hence we can assume without loss of generality that $|A| \leq n/2$, and thus
$|V(G)| -  |A| \geq n/2$.

Further, as $e(A) \leq \sum_{v\in A}\mathop{{\rm deg}}(v)$, we have that $e(A) \leq |A|{\frac
n 4}$. Hence, $2e(A) \leq |A|(|V(G)| -  |A|)$, and the condition of
Proposition~\ref{prop:kozerenko} is fulfilled.
\maybeqed\end{proof}

Proposition~\ref{prop:maxdeg} implies that \textsc{Switch-Few-Edges} and
\textsc{Switch-Minimal} are trivially solvable in polynomial time for graphs on
$n$ vertices with maximum degree at most~$\frac n 4$.

We note that in Proposition~\ref{prop:maxdeg}, the bound $\frac n 4$ in general
cannot be improved, as shown by the example of a $k$-regular bipartite graph on
$n$ vertices with $k>\frac n 4$. Such a graph is switching-equivalent to a
$(\frac n 2 -k)$-regular bipartite graph, and therefore is not
switching-minimal.


\section{NP-Completeness of {\sc Switch-Few-Edges}}
\label{sec:swfew-npc}

Jel{\'\i}nkov{\' a} et al.~\cite{JSHK11-param} presented a proof that the problem
\textsc{Switch-Few-Edges} is NP-complete. Unfortunately, there is an error in their proof.
We present another proof here. The core of the original proof is a reduction from the \textsc{Max-Cut} problem.
Our reduction works in a similar way. However, we need the following more special version of
\textsc{Max-Cut} (we prove the NP-completeness of {\sc Large-Deg-Max-Cut} in
Section~\ref{sec:npc-large-max-cut}).

\vskip\fboxsep
\vskip\fboxsep
\noindent
\framebox[\hsize][l]{\begin{minipage}{\sirka}
\textsc{Large-Deg-Max-Cut}\\
 \textbf{Input:} A graph $G$ with $2n$ vertices such that the minimum vertex
degree of $G$ is $2n-4$ and the complement of $G$ does not contain triangles; an integer $j$\\
 \textbf{Question:} Does there exist a cut $V_1$ of $V(G)$ with at least $j$ cut-edges?
 \end{minipage}}
\vskip\fboxsep
\vskip\fboxsep

\begin{proposition}
\label{prop:eq}
Let $G$ be a graph. In polynomial time, we
can find a graph $G'$ such that $|V(G')| = 4|V(G)|$ and the following
statements are equivalent for every integer $j$:
\begin{enumerate}
\item[(a)] There is a cut in $G$ with at least $j$ cut-edges,
\item[(b)] there exists a set $A\subseteq V(G')$ such that $S(G',A)$
contains at most $|E(G')| - 16j$ edges.
\end{enumerate}
\end{proposition}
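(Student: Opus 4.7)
The plan is to reduce \textsc{Max-Cut} on $G$ to the question in~(b) by a blowup construction. Set $V(G'):=V(G)\times[4]$, write $B_v:=\{v\}\times[4]$ for each $v\in V(G)$, and leave every block internally edgeless. Between blocks $B_u$ and $B_v$ with $u\ne v$, put the complete bipartite graph $K_{4,4}$ if $uv\in E(G)$, and put a fixed balanced bipartite graph with exactly $8$ edges (for instance a $2$-regular bipartite graph on the eight vertices) if $uv\notin E(G)$. Clearly $|V(G')|=4|V(G)|$ and $G'$ is constructible in polynomial time.

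For (a)$\Rightarrow$(b), take a cut $(V_1,V_2)$ of $G$ with $\cut(V_1)\ge j$ and let $A:=\bigcup_{v\in V_1}B_v$. I plan to use the identity $|E(S(G',A))|=|E(G')|+|A|(|V(G')|-|A|)-2e_{G'}(A)$. A direct count gives $|A|(|V(G')|-|A|)=16|V_1||V_2|$ and $e_{G'}(A)=16\cut(V_1)+8(|V_1||V_2|-\cut(V_1))=8|V_1||V_2|+8\cut(V_1)$, which combine to $|E(S(G',A))|=|E(G')|-16\cut(V_1)\le|E(G')|-16j$.

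For (b)$\Rightarrow$(a), given any $A\subseteq V(G')$ with $|E(S(G',A))|\le|E(G')|-16j$, I plan to produce a block-respecting set $A'=\bigcup_{v\in V_1}B_v$ with $|E(S(G',A'))|\le|E(G')|-16j$; the cut $V_1$ then satisfies $\cut(V_1)\ge j$ by the formula of the previous paragraph. The set $A'$ is obtained by a local rounding: for each block $B_v$, with the other intersections $A\cap B_u$ held fixed, replace $A\cap B_v$ by whichever of $\emptyset$ or $B_v$ gives the smaller edge count of $S(G',A)$, and iterate over all blocks until a block-respecting set is reached.

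The main obstacle is proving that this single-block rounding never increases $|E(S(G',A))|$. Fixing the other blocks and writing $a_v=|A\cap B_v|$, the term $|A|(|V(G')|-|A|)$ is a strictly concave quadratic in $a_v$, so its contribution to $|E(S(G',A))|$ is strictly convex in $a_v$; the $K_{4,4}$-gadgets contribute to $2e_{G'}(A)$ an affine function of $a_v$ that is moreover independent of the particular subset of $B_v$ chosen. The delicate work is concentrated on the $8$-edge gadgets, whose contribution to $2e_{G'}(A)$ can depend on the specific set $A\cap B_v$; the gadget and the intra-block structure must be chosen symmetrically enough (and the argument may have to compare $A\cap B_v$ with its complement inside $B_v$ as well as with $\emptyset$ and $B_v$) so that the strict convexity of the quadratic dominates, forcing the minimum of $|E(S(G',A))|$ over the five values of $a_v$ to lie at one of the endpoints $a_v\in\{0,4\}$. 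This verification is local and essentially a case analysis over $|A\cap B_v|\in\{1,2,3\}$, but it is where the choice of gadgets has to be made with care.
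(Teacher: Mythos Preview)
Your construction of $G'$ and your argument for (a)$\Rightarrow$(b) are exactly what the paper does. The difficulty, as you correctly identify, is entirely in (b)$\Rightarrow$(a), and here your plan diverges from the paper's proof and has a genuine gap.

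First, a small slip: the term $+|A|(|V(G')|-|A|)$ is concave in $a_v$, so its contribution to $|E(S(G',A))|$ is concave, not convex. Concavity is indeed what you want for the minimum over $a_v\in\{0,\dots,4\}$ to sit at an endpoint, so this is harmless once corrected --- but it signals that the convexity heuristic you are leaning on is shakier than it looks.

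The substantive issue is that you never carry out the ``local case analysis'' for the non-edge gadgets, and the paper's proof strongly suggests it does not go through in the form you describe. The paper does \emph{not} claim that single-block rounding is always non-increasing. Instead it designs a more elaborate algorithm (``Legalize'') that sometimes rounds two adjacent broken blocks simultaneously (their Case~4), tracks a fractional ``charge'' as an upper bound on the edge count, and only shows that legalization increases the edge count by at most~$7$; since $7<16$, this slack is absorbed by the factor~$16$ in the statement. Crucially, the $+7$ bound relies on structural hypotheses on $G$ inherited from \textsc{Large-Deg-Max-Cut}: every vertex of $G$ has at most three non-neighbours (so each block is incident to at most three $8$-edge gadgets), and the complement of $G$ is triangle-free (so in the terminal step at most two broken blocks remain, and they are non-adjacent). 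These hypotheses are used explicitly in the paper's case analysis and are absent from your argument; without them the per-vertex contribution of the non-edge gadgets is unbounded, and neither your single-block rounding nor the paper's $+7$ bound is available.

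So as written your (b)$\Rightarrow$(a) is a hope rather than a proof: you would need either to actually verify that for your chosen $8$-edge gadget the one-block rounding never increases the edge count (and this must hold against an arbitrary configuration on the neighbouring blocks, with arbitrarily many non-edge gadgets incident to $B_v$), or to adopt the paper's strategy --- allow a bounded increase, round in pairs when necessary, and invoke the degree and triangle-freeness assumptions on $G$ to cap the damage.
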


\begin{proof}
We first describe the construction of the graph~$G'$. For each vertex
$u$ of $G$ we create a corresponding four-tuple $\{u', u'', u''', u''''\}$ of
pairwise non-adjacent vertices in~$G'$. An edge of $G$ is then represented by a
complete bipartite graph interconnecting the two four-tuples, and a non-edge in
$G$ is represented by $8$ edges that form a cycle that alternates between the
two four-tuples (see Fig.~\ref{figure:udelatka}).

We remark that our construction of $G'$ follows a similar idea as the
construction in the attempted proof of Jel{\'\i}nkov{\' a} et al.~\cite{JSHK11-param},
a notable difference being that in the original construction, a vertex of $G$
was replaced by a pair of vertices of $G'$ rather than a four-tuple.


\begin{figure}
\centering
\includegraphics{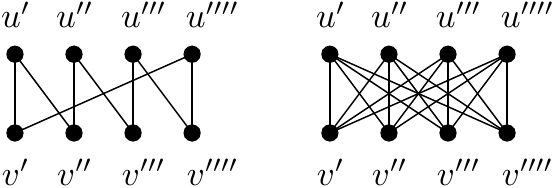}
\caption{The representation of non-edges and edges of $G$.\label{figure:udelatka}}
\end{figure}

A vertex four-tuple in $G'$ corresponding to a vertex of $G$ is called an
\emph{o-vertex}. A pair of o-vertices
corresponding to an edge of $G$ is called an \emph{o-edge} and a pair of o-vertices
corresponding to an non-edge of $G$ is called an \emph{o-non-edge}. Where there is no danger of confusion, we
identify o-vertices with vertices of $G$, o-edges with edges of $G$ and o-non-edges with non-edges of $G$.

We now prove that the statements (a) and (b) are equivalent.
First assume that there is a cut $V_1$ of $V(G)$ with $j'$ cut-edges. Let $V_1'$
be the set of vertices $u', u'', u''', u''''$ for all $u \in V_1$.
We prove that $S(G',V_1')$ contains at most $|E(G')| - 16j'$ edges.

We say that a non-edge \emph{crosses the cut $V_1$} if the non-edge has exactly one
vertex in $V_1$.
It is clear that $G'$ contains 16 edges per every o-edge and 8 edges per every o-non-edge.
In $S(G',V_1')$, every o-edge corresponding to an edge that is not a cut-edge is unchanged
by the switch and yields 16 edges. Similarly, every o-non-edge corresponding to a non-edge that
does not cross the cut yields 8 edges.

Fig.~\ref{figure:ctv-legalni} illustrates the switches of o-non-edges and
o-edges that have exactly one end-o-vertex in $V_1$.
We can see that every o-non-edge corresponding
to a non-edge that crosses the cut yields 8 edges in $S(G',V_1')$, and that every o-edge corresponding
to a cut-edge yields 0 edges. Altogether, $S(G',V_1')$ has $|E(G')| - 16j'$ edges,
which we wanted to prove.

\begin{figure}
\centering
\includegraphics{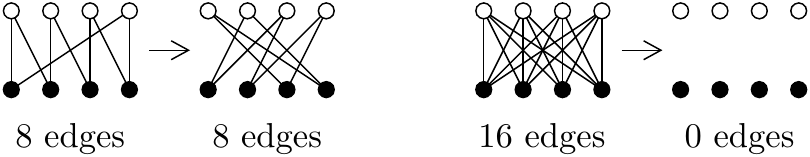}
\caption{Switches of an o-non-edge and of an o-edge.\label{figure:ctv-legalni}}
\end{figure}

Now assume that there exists a set $A\subseteq V(G')$ such that $S(G',A)$
contains at most $|E(G')| - 16j$ edges. We want to find a cut in $G$ with at
least $j$ cut-edges.

We say that an o-vertex $u$ of $G'$ is \emph{broken in $A$} if $A$ contains
exactly one, two or three vertices out of $u', u'', u''', u''''$; otherwise, we say that $u$ is \emph{legal in $A$}.
We say that an o-edge or o-non-edge $\{u,v\}$ is \emph{broken in $A$} if at
least one of the o-vertices $u$, $v$ is broken. Otherwise, we say that $\{u,v\}$ is \emph{legal in $A$}.

If all vertices of $G$ are legal in $A$, we say that $A$ is \emph{legal}.
Legality is a desired property, because for a legal set $A$ we can define a subset
$V_A$ of $V(G)$ such that 
\[
V_A =\left\{u \in V(G): \{u',u'',u''',u''''\}\subseteq A\right\}. 
\]
The set $V_A$
then defines a cut in $G$.
If a set is not legal, we proceed more carefully to get a cut from it.
For any vertex subset $A$, we say that a set $A'$ is a \emph{legalization} of $A$ if $A'$
is legal and if $A'$ and $A$ differ only on o-vertices that are broken in $A$.

We want to show that for every illegal set $A$, there exists its legalization
$A'$ such that the number of edges in $S(G',A')$ is not much higher than in
$S(G',A)$. To this end, we give the Algorithm Legalize
which for a set $A$ finds such a legalization $A'$. 
During the run of the Algorithm, we keep a set $A''$. In the beginning we set
$A'' := A$ and in each step we change $A''$ so that more o-vertices are legal.

We define some notions needed in the Algorithm. Let $v$ be an o-vertex and consider the o-vertices that are adjacent to $v$
(through an o-edge); we call them \emph{o-neighbors} of $v$. The o-neighbors of
$v$ are four-tuples of vertices and some of those vertices are in $A''$, some of them are not.
We define $\dif(v)$ as the number of such vertices that are in $A''$ minus the number
of such vertices that are not
in $A''$. (Note that $\dif(v)$ is always an even number, because the total number of vertices
in o-neighbors is even. If all o-neighbors were legal, then $\dif(v)$ would be divisible by four.)

\begin{figure}
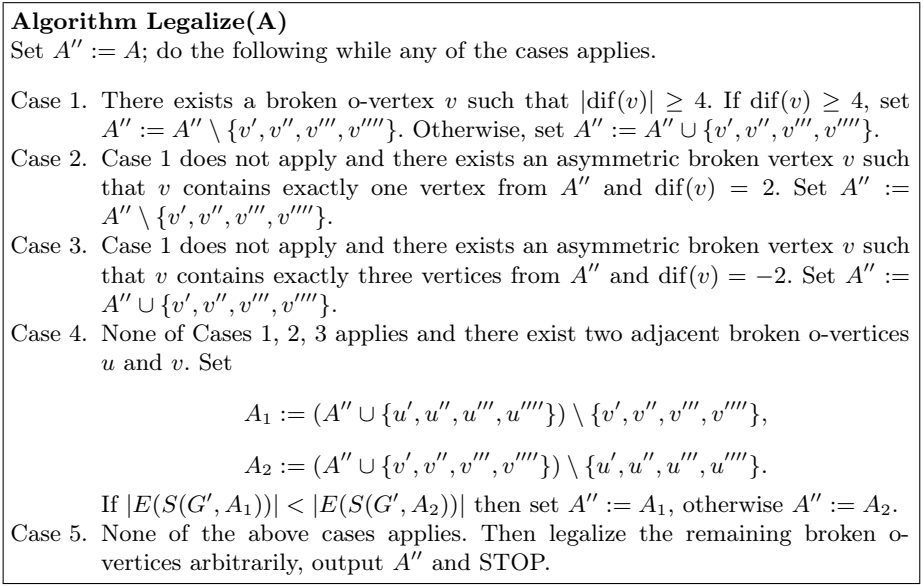

\centering
\framebox[\hsize][l]{\begin{minipage}{\sirka}
\textbf{Algorithm Legalize(A)}\\
Set $A'' := A$; do the following while any of the cases applies.
\begin{enumerate}[label=Case \arabic*.,leftmargin=*]
\item\label{case:a} 
There exists a broken o-vertex $v$ such that $|\dif(v)| \geq 4$. If $\dif(v) \geq 4$,
set $A'' := A'' \setminus \{v', v'', v''', v''''\}$. Otherwise,
set $A'' := A'' \cup \{v', v'', v''', v''''\}$.
\item\label{case:b} 
Case 1 does not apply and there exists an asymmetric broken vertex $v$ such that
$v$ contains exactly one vertex from $A''$ and $\dif(v) = 2$. Set $A'' := A''
\setminus \{v', v'', v''', v''''\}$.
\item\label{case:c} 
Case 1 does not apply and there exists an asymmetric broken vertex $v$ such that
$v$ contains exactly three vertices from $A''$ and $\dif(v) = -2$. Set $A'' := A''
\cup \{v', v'', v''', v''''\}$.
\item\label{case:d} 
None of Cases 1, 2, 3 applies and there exist two adjacent broken
o-vertices $u$ and $v$. Set
$$A_1 := (A'' \cup \{u', u'', u''', u''''\}) \setminus \{v', v'', v''', v''''\},$$
$$A_2 := (A'' \cup \{v', v'', v''', v''''\}) \setminus \{u', u'', u''', u''''\}.$$
If $|E(S(G',A_1))| < |E(S(G',A_2))|$ then set $A'' := A_1$, otherwise $A'' := A_2$.
\item\label{case:e} 
None of the above cases applies. Then legalize the remaining broken
o-vertices arbitrarily, output $A''$ and STOP.
\end{enumerate}
 \end{minipage}}
\caption{The Algorithm Legalize.\label{figure:alg}}
\end{figure}

The Algorithm is given in Fig.~\ref{figure:alg}.
As in the last step the Algorithm legalizes all remaining broken o-vertices, it
is clear that the set $A''$ output by the Algorithm is a
legalization of $A$.
We prove that $|E(S(G',A''))| - |E(S(G',A))| \leq 7$.

We need to introduce more terminology. A pair of vertices of $G'$ which belong to the same o-vertex is called a
\emph{v-pair}. A pair of vertices of $G'$ which belong to different
o-vertices that are adjacent (in $G$) is called an \emph{e-pair}.
A pair of vertices of $G'$ which belong to different o-vertices that are
non-adjacent (in $G$) is called an \emph{n-pair}. It is easy to see that any
edge of $G'$ or $S(G',A'')$ is either a v-pair, an e-pair or an n-pair. We call such edges \emph{v-edges},
\emph{e-edges} and \emph{n-edges}, respectively.

We say that a broken o-vertex $v$ is \emph{asymmetric} if it contains an odd number of
vertices of $A''$; we say that a broken o-vertex
is \emph{symmetric} if it contains two vertices out of $A''$.

To measure how the number of edges of $S(G',A'')$ changes during the run of the
Algorithm, we define a variable $\ch(A'')$ which we call the \emph{charge} of
the graph $S(G',A'')$.
Before the first step we set $\ch(A'') := |E(S(G',A))|$. After a step of the
Algorithm, we update $\ch(A'')$ in the following way.

\begin{itemize}
\item For every v-pair or e-pair that was an edge of $S(G',A'')$ before the step
and is no longer an edge of $S(G',A'')$ after the step, we decrease $\ch(A'')$
by one.
\item For every v-pair or e-pair that was not an edge of $S(G',A'')$ before the step
and that has become an edge of $S(G',A'')$ after the step, we increase $\ch(A'')$
by one.
\item For every o-vertex $v$ that was legalized in the step and is incident to an o-non-edge,
we change $\ch(A'')$ in the following way:
\begin{itemize}
\item if $v$ is symmetric, we increase $\ch(A'')$ by $2.5$ for every o-non-edge
incident to $v$;
\item if $v$ is asymmetric, we increase $\ch(A'')$ by $1.5$ for every o-non-edge
incident to $v$.
\end{itemize}
\end{itemize}

\begin{figure}
\centering
\includegraphics{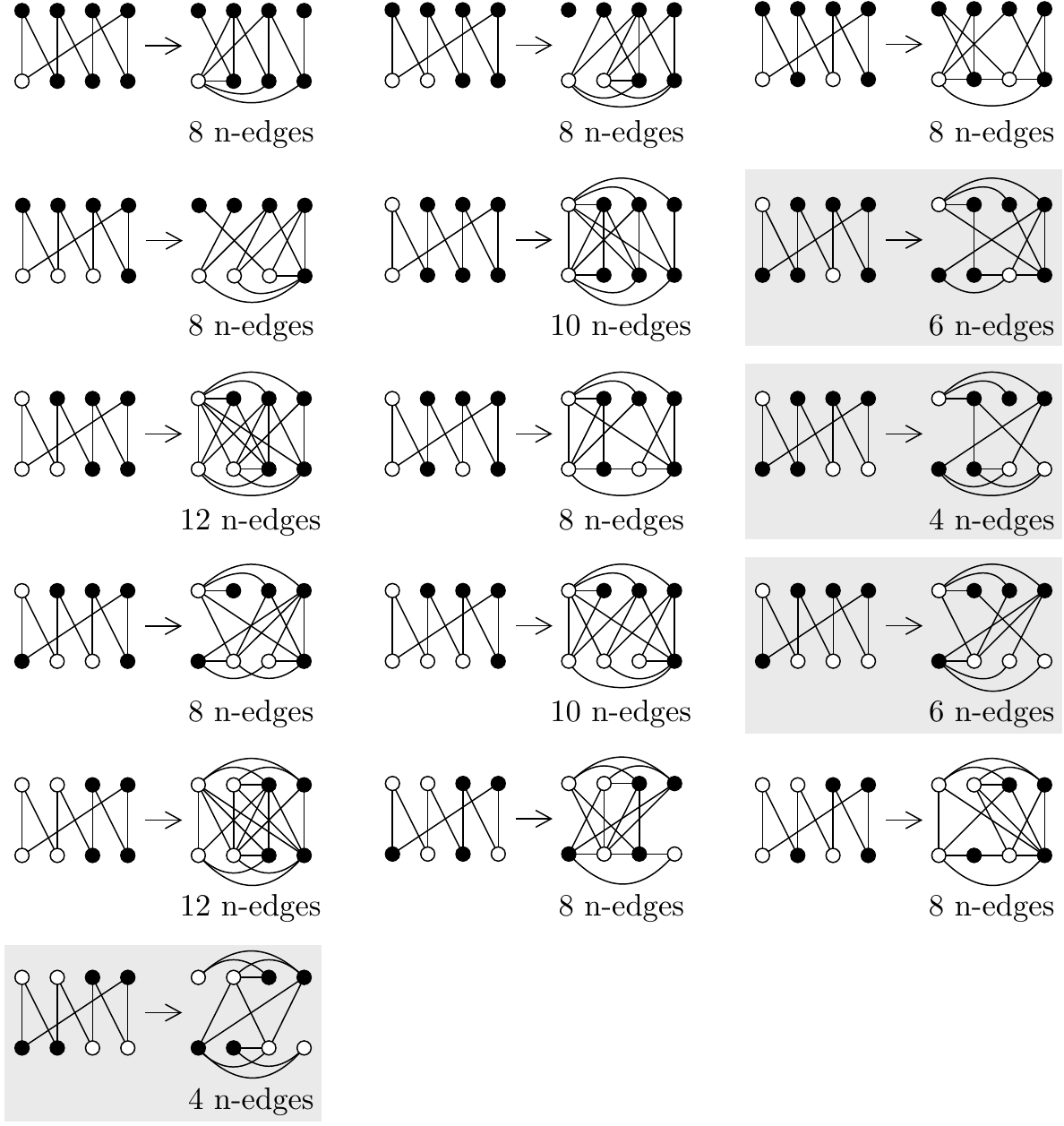}
\caption{All possible illegal switches of o-non-edges (up to symmetry). Vertices
of $A$ are marked in white and edges are as in $G$ (left to the arrow) and as in
$S(G',A)$ (right to the arrow). In the highlighted cases, the number of n-edges in $S(G',A)$ is lower
than~$8$.\label{figure:ctv-non-edges}}
\end{figure}

To explain the last two points, we observe how the number of n-edges increases
after legalizing an o-vertex. By analyzing all cases of o-non-edges with one or
two broken end-o-vertices (see Fig.~\ref{figure:ctv-non-edges}),
we get that there are four cases where the o-non-edges have less than 8
n-edges before legalization: either 6 or 4 n-edges. In these cases, both end-o-vertices are broken.
If there are only 4 n-edges, at least one of the end-o-vertices is symmetric.
After one end-o-vertex is legalized,
the number of n-edges increases by 2 or 4. When the second end-o-vertex is legalized, the
number of n-edges does not increase for this particular o-non-edge.

After both end-o-vertices are legalized, the charge has been changed in the
following way: if both end-o-vertices were symmetric, we have
increased the charge by $5$. If one of them was symmetric and the other one was
asymmetric, we have increased the charge by $4$. Finally, if both were
asymmetric, we have increased the charge by $3$. In all these cases, 
the increase is an upper bound on the number of contributed n-edges.

Further, every v-edge or e-edge that has appeared or disappeared during the run of the Algorithm
is counted immediately after the corresponding step. Hence, we have proved the following Claim.

\begin{claimm}\label{cl:upper-bound}
At the end of the Algorithm we have that $\ch(A'') \geq |E(S(G',A''))|$.
\end{claimm}

Next, we give an upper bound on the charge $\ch(A'')$.

\begin{claimm}\label{cl:decrease}
After every step of the Algorithm except for the last one, the charge $\ch(A'')$ is decreased.
After the last step, the charge is increased by at most 7. Hence, $\ch(A'') \leq |E(S(G',A))| + 7$.
\end{claimm}

To prove Claim~\ref{cl:decrease}, we count how the charge changes after each
step of the Algorithm Legalize. We distinguish cases according to which the step was done.

\begin{enumerate}[label=Case \arabic*.,leftmargin=*]
\item 
	We may assume without loss of generality that $\dif(v)\geq 4$ (otherwise we swap the roles of $A''$ and $V(G') \setminus A''$.
	Further, $v$ can be either symmetric or asymmetric; we first assume that $v$ is symmetric (see Fig.~\ref{figure:case1}). Then by its legalization the number
	of v-edges is decreased by $4$.

	As $\dif(v)\geq 4$, then among vertices in o-neighbors of $v$, there must be at least four more vertices
	belonging to $A''$ than those not belonging to $A''$. Thus, by removing any vertex of $\{v', v'', v''', v''''\}$ from $A''$
	we reduce the number of e-edges by at least $4$. As $v$ contains two vertices out of $\{v', v'', v''', v''''\}$, we reduce the number
	of e-edges by at least $8$.

	For n-pairs that have one vertex inside $v$ the charge is increased by at most $3 \cdot 2.5$, which is $7.5$.
	To sum it up:
	\begin{itemize}
	\item For v-pairs the charge is decreased by $4$,
	\item for e-pairs the charge is decreased by at least $8$,
	\item for n-pairs the the charge is increased by at most $7.5$.
	\end{itemize}
	Altogether, the charge is decreased by at least $4.5$.

	If the o-vertex $v$ is asymmetric, then in an analogical way we have
	that for v-pairs the charge is decreased by 3, for e-pairs the charge is decreased by at least 4,
	and for n-pairs the the charge is increased by at most $4.5$.
	Altogether, the charge is decreased by at least $2.5$.

\item 
	The analysis is similar as above. We get $-3$ for v-pairs, $-2$ for e-pairs, and $\leq 4.5$ for n-pairs.
	Altogether, the charge is decreased by at least $0.5$.
\item 
This case is symmetric to Case 2. Hence, the charge is decreased by at least $0.5$ as well.
\item 
	In this case, when counting how the charge was changed because of e-pairs, we need to bound both
	the number of e-edges between a vertex in $u$ and a vertex in $v$, and the number of e-edges
	between a vertex inside $u$ or $v$ and a vertex inside one of their other o-neighbors.
	This depends also on the values of $\dif(u)$ and $\dif(v)$.

	We analyze four subcases of o-edges whose both end-o-vertices are
	broken -- they are numbered as in Fig.~\ref{figure:ctv-edges}.

	\begin{enumerate}[label=\Roman*.,leftmargin=*]
	\item 

\begin{figure}[t]
\centering
\includegraphics{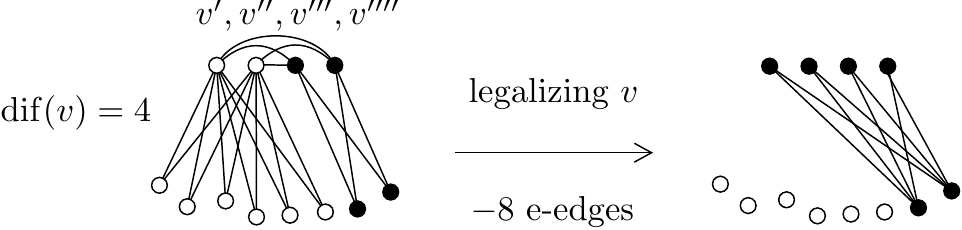}
\caption{A simplified illustration to the analysis of Case 1. Vertices of $A''$ are marked in white, and edges are as in $S(G',A'')$
before the step (left to the arrow) and after the step (right to the arrow).
\label{figure:case1}}
\end{figure}

\begin{figure}[t]
\centering
\includegraphics{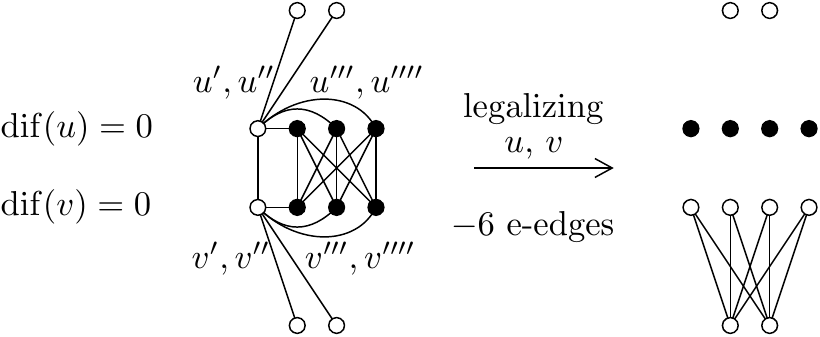}
\caption{A simplified illustration to the analysis of Case 4, I. Vertices of $A''$ are marked in white, and edges are as in $S(G',A'')$ before the step (left to the
arrow) and after the step (right to the arrow).\label{figure:caseI}}
\end{figure}

\begin{figure}
\centering
\includegraphics{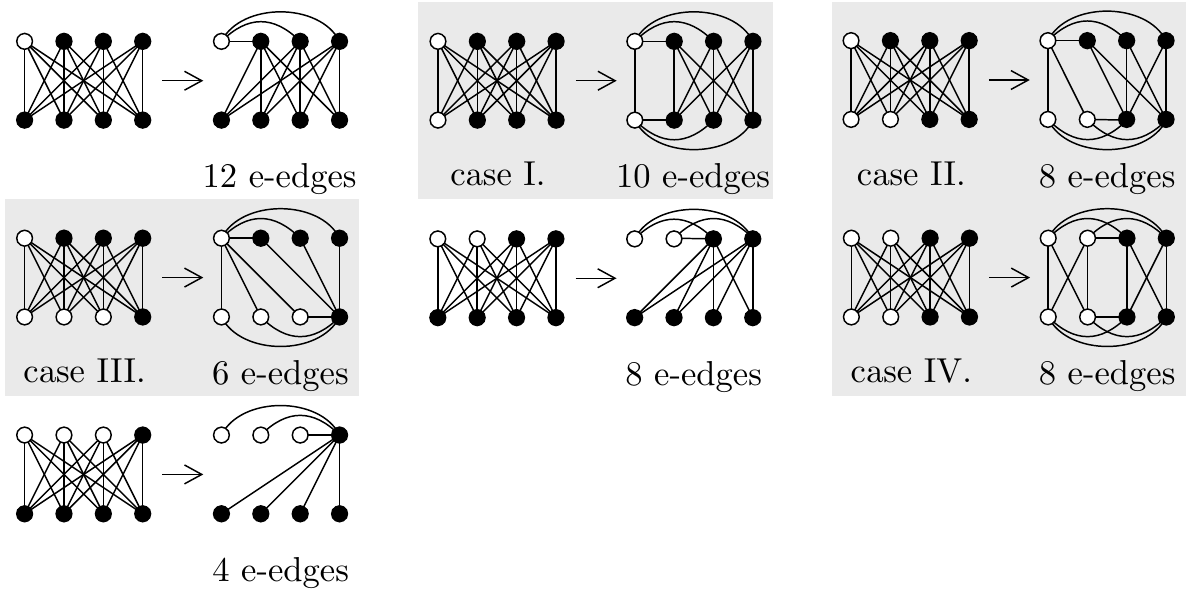}
\caption{All possible illegal switches of o-edges (up to symmetry). Vertices
of $A$ are marked in white and edges are as in $G$ (left to the arrow) and as in
$S(G',A)$ (right to the arrow). In the highlighted cases, both end-o-vertices
are broken.\label{figure:ctv-edges}}
\end{figure}

	First assume that $\dif(u) = 0$ and $\dif(v) = 0$ (see Fig.~\ref{figure:caseI}). We can see that vertices
	inside $v$ contribute by $-2$ to $\dif(u)$. Hence, outside $v$, there must be two more
	vertices in o-neighbors of $u$ that are in $A''$ than those not in $A''$.
	The same holds symmetrically for o-neighbors of $v$ outside $u$.

	We may without loss of generality assume that the Algorithm chose to set
	$$A'' := (A'' \cup \{v', v'', v''', v''''\}) \setminus \{u', u'', u''', u''''\}.$$
	Then, the number of e-pairs adjacent to both $u$ and $v$ is decreased by 10; the number of e-pairs adjacent to $u$ and not to $v$
	is decreased by $2$, and the number of e-pairs adjacent to $v$ and not to $u$ is increased by $6$. Altogether,
	the charge is decreased by $6$ for e-pairs.

	For v-pairs, the charge is decreased by $6$, and for n-pairs, the charge is increased by at most $6\cdot 1.5$. Altogether,
	the charge is decreased by at least $3$.

	It remains to analyze the cases when $\dif(u)$ and $\dif(v)$ are different. As neither Case 2 nor Case 3 applies, we know
	that none of $\dif(u)$, $\dif(v)$ is equal to $2$.

	By analogical ideas as above, we get that if one of $\dif(u)$, $\dif(v)$ is equal to $0$ and the other one to $-2$,
	the charge is decreased by at least $9$. If both $\dif(u)$, $\dif(v)$ are equal to $-2$, then the charge is decreased by at least $7$.
	\item[II.] 
	As $u$ is asymmetric and $v$ is symmetric, we have that for n-pairs the
	charge is increased by $3 \cdot 2.5 + 3 \cdot 1.5$, which is $12$. For v-pairs, the
	charge is decreased by $3 + 4$, which is $7$.

	We consider the case when the Algorithm chose to set
	$$A'' := A'' \cup \{u', u'', u''', u''''\} \setminus \{v', v'', v''', v''''\}$$
	(if we get a sufficient bound for this case, then the other case could
	only be better).

	To count the decrease for e-pairs, we need to consider the values of
	$\dif(u)$ and $\dif(v)$.
	Assume that $\dif(u) = 0$. Then, outside $v$, there must be the same number of vertices in
	o-neighbors of $u$ that are in $A''$ as those that are not in $A''$.
	
	If $\dif(v) = 0$, then outside $u$ there must be two more vertices in o-neighbors of $v$ that are in
	$A''$ than those not in $A''$. Then for e-pairs, the charge is decreased
	by $12$. If $\dif(v) = 2$, then using analogous ideas we get that for e-pairs, the charge is decreased
        by $16$. If $\dif(v) = -2$, we get $8$.

	Now assume that $\dif(u) = -2$. By considering the number of vertices in
	o-neighbors of $u$ and $v$, we get that the charge decrease for e-pairs is
	either $14$ (if $\dif(v) = -2$) or $18$ (if $\dif(v) = 0$) or $22$ (if
	$\dif(v)= 2$).

	As Case 2 does not apply, we know that $\dif(u)$ is not equal to $2$.
	Hence, we have considered all the cases, and the charge decrease for
	e-pairs is at least $8$. Altogether, the charge is decreased by at least
	$-12 +7 + 8$, which is $3$.

	\item[III.] 
	As both $u$ and $v$ are asymmetric, we have that for n-pairs the charge is increased
	by $6 \cdot 1.5$, which is $9$. For v-pairs, the charge is decreased by
	$3 + 3$, which is $6$.

	Again, we consider the case when the Algorithm chose to set
        $$A'' := A'' \cup \{u', u'', u''', u''''\} \setminus \{v', v'', v''', v''''\}.$$

	By using the same idea as above, we get that for e-pairs, the charge is
	decreased by $18$ (if $\dif(u) = 0$ and $\dif(v) = 0$), or by $24$ (if
	$\dif(u) = -2$ and $\dif(v) = 0$, or if $\dif(u) = 0$ and $\dif(v) =
	2$), or by $30$ (if $\dif(u) = -2$ and $\dif(v) = 2$).

	As Case 2 does not apply, we know that $\dif(u)$ cannot be $2$ and
	$\dif(v)$ cannot be $-2$. Hence, we have considered all the cases and for e-pairs, the charge is decreased by
	at least $18$. Altogether, the charge is decreased by at least
	$-9 + 6 + 18$, which is $15$.

	\item[IV.] 
	As both $u$ and $v$ are symmetric, we have that for n-pairs the charge is increased
	by $6 \cdot 2.5$, which is $15$. For v-pairs, the charge is decreased by
	$4 + 4$, which is $8$.

	Without loss of generality, we consider only cases when $\dif(u) \leq \dif(v)$ (the other cases
	are symmetric). Thus, we may limit ourselves again to the case when the Algorithm
	chose to set
        $$A'' := A'' \cup \{u', u'', u''', u''''\} \setminus \{v', v'', v''', v''''\}.$$
	
	If $\dif(u) = \dif(v)$, then we easily check that the charge decrease for e-pairs
	is $8$. If $\dif(u) = 0$ and $\dif(v) = 2$ then the charge decrease for
	e-pairs is $12$. If $\dif(u) = -2$ and $\dif(v) = 0$ then the decrease is $12$,
	and if $\dif(u) = -2$ and $\dif(v) = 2$ then the decrease is $16$.

	Altogether, the charge decrease for e-pairs is at least $8$, and the
	total decrease is at least $-15 + 8 + 8$, which is $1$. 
	\end{enumerate}

\item 
If Case 5 applies, then all remaining broken o-vertices must be pairwise non-adjacent
(because Case~4 does not apply). Hence, there must be at most two broken o-vertices left
(otherwise, there would be a triangle in the complement of the input graph,
which would contradict the assumptions).
Further, each of these o-vertices has $\dif = 0$, because all its o-neighbors are
legal and Case 1 does not apply. Thus, the charge change for e-pairs due to this
last step is $0$.

To count the charge change for n-pairs and v-pairs, we analyze the five cases (one or two o-vertices, symmetric or asymmetric). 
If there is one symmetric o-vertex left, then the charge increase for n-pairs is
$3 \cdot 2.5$ and the decrease for v-pairs is $4$, hence the total increase is $7.5
- 4$, which is $3.5$. If there is one asymmetric o-vertex, then the total
increase is $3\cdot 1.5 - 3$, which is $1.5$.

If there are two broken o-vertices left and both are asymmetric, then
the total increase is $6\cdot 1.5 - 6$, which is $3$. If one of them is
symmetric and the other one is asymmetric we get $5$; if both are symmetric, we
get~$7$. Altogether, we get that the charge is increased by at most 7.
\end{enumerate}

We have proved Claim~\ref{cl:decrease}. Further, by Claim~\ref{cl:upper-bound} and
Claim~\ref{cl:decrease} we have that $|E(S(G',A''))| \leq |E(S(G',A))| + 7$, and hence $A''$ is the sought legalization of
$A$.

We continue the proof of Proposition~\ref{prop:eq}.
We have already argued that a legal set $A''$ defines a subset $V_{A''}$ of $V(G)$, and hence a cut in $G$.
Assume that $\cut(V_{A''})$ has $j'$ edges.
From the proof of the first implication of Proposition~\ref{prop:eq} we know that the number of edges in
$S(G',A'')$ can be expressed as $|E(G')| - 16j'$.

On the other hand, we have proved that the number of edges in $S(G',A'')$ is at most $|E(G')| - 16j + 7$. We get that
$|E(G')| - 16j'  \leq  |E(G')| - 16j + 7$, and hence $j' \geq j - 7/16$. As both $j$ and $j'$ are
integers, we have that $j' \geq j$. Hence, $\cut(V_{A''})$ has at least $j$
edges, and Proposition~\ref{prop:eq} is proved.
\maybeqed\end{proof} 

\begin{theorem}
\label{thm:sw-few-npc}
{\sc Switch-Few-Edges} is NP-complete.
\end{theorem}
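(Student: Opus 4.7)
The plan is to combine two ingredients: the transformation packaged in Proposition~\ref{prop:eq} and the NP-completeness of \textsc{Large-Deg-Max-Cut}, whose proof will appear in Section~\ref{sec:npc-large-max-cut}. Membership in NP is immediate: a certificate is a subset $A\subseteq V(G)$, and computing $S(G,A)$ and comparing $|E(S(G,A))|$ to the threshold $k$ takes polynomial time.

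For NP-hardness I will reduce from \textsc{Large-Deg-Max-Cut}. Given an instance $(G,j)$ in which $G$ has $2n$ vertices, minimum degree $2n-4$, and a triangle-free complement, I apply the polynomial-time construction of Proposition~\ref{prop:eq} to obtain a graph $G'$ on $8n$ vertices, and I set $k := |E(G')| - 16j$. The output of the reduction is the instance $(G',k)$ of \textsc{Switch-Few-Edges}. The equivalence of statements~(a) and~(b) in Proposition~\ref{prop:eq} is then exactly what is needed: $G$ admits a cut with at least $j$ cut-edges if and only if $G'$ is $(\leq k)$-switchable. Since both $G'$ and $k$ are produced in polynomial time, this is a valid polynomial reduction, so \textsc{Switch-Few-Edges} is NP-hard and hence NP-complete.

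The main conceptual point to record is why it is legitimate to feed a \textsc{Large-Deg-Max-Cut} instance into Proposition~\ref{prop:eq}: the triangle-free-complement hypothesis that defines \textsc{Large-Deg-Max-Cut} is precisely the assumption used to control Case~5 of the Algorithm Legalize in the proof of Proposition~\ref{prop:eq}, where one concludes that at most two broken o-vertices can remain. Because the entire combinatorial core of the argument, including the charge-accounting of the Algorithm Legalize, is already absorbed into Proposition~\ref{prop:eq}, no genuine obstacle is left at this stage: the theorem follows by stringing the reduction together with the NP-completeness of the restricted \textsc{Max-Cut} variant.
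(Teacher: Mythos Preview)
Your proof is correct and follows essentially the same route as the paper: membership in NP via the switching set as certificate, and NP-hardness via Proposition~\ref{prop:eq} applied to an instance of \textsc{Large-Deg-Max-Cut}, with $k=|E(G')|-16j$. Your added remark about why the triangle-free-complement hypothesis is needed for Case~5 of Algorithm Legalize is a helpful clarification (and one could similarly note that the minimum-degree condition bounds the number of o-non-edges at each o-vertex, which is used throughout the charge analysis), but the core argument matches the paper's exactly.
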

\begin{proof}
Theorem~\ref{thm:large-deg-max-cut} in the next section gives the NP-completeness of
\textsc{Large-Deg-Max-Cut}. Further, by Proposition~\ref{prop:eq}, an instance
$(G,j)$ of \textsc{Large-Deg-Max-Cut} can be transformed into an instance
$(G',j')$ of {\sc Switch-Few-Edges} such that there is a cut in $G$ with at
least $j$ cut-edges if and only if $G'$ is $(\leq j')$-switchable. The
transformation works in polynomial time.

Finally, it is clear that the problem {\sc Switch-Few-Edges} is in NP.
\maybeqed\end{proof}

\section{The NP-Completeness of {\sc Large-Deg-Max-Cut}}\label{sec:npc-large-max-cut}


Let $G$ be a graph with $2n$ vertices. A \emph{bisection of $G$} is a partition
$S_1$, $S_2$ of $V(G)$ such that $|S_1| = |S_2| = n$ (hence, a bisection is a
special case of a cut). The size of $\cut(S_1)$
is called the \emph{size} of the bisection $S_1$, $S_2$. A \emph{minimum bisection
of $G$} is a bisection of $G$ with minimum size.

Garey et al.~\cite{GJS76} proved that, given a graph $G$ and an integer $b$, the
problem to decide if $G$ has a bisection of size at most $b$ is NP-complete (by a reduction
of \textsc{Max-Cut}). Their formulation is slightly different from ours -- two distinguished
vertices must be each in one part of the partition, and the input graph does not
have to be connected. However, their reduction from \textsc{Max-Cut}
(see~\cite[pages 242--243]{GJS76}) produces only
connected graphs as instances of the bisection problem, and it is immediate that
the two distinguished vertices are not important in the proof. Hence, their
proof gives also the NP-completeness of the following version of the problem.

\vskip\fboxsep
\vskip\fboxsep
\noindent
\framebox[\hsize][l]{\begin{minipage}{\sirka}
\textsc{Connected-Min-Bisection}\\
 \textbf{Input:} A connected graph $G$ with $2n$ vertices, an integer $b$\\
 \textbf{Question:} Is there a bisection $S_1$, $S_2$ of $V(G)$ such that
$\cut(S_1)$ contains at most $b$ edges?
 \end{minipage}}
\vskip\fboxsep
\vskip\fboxsep

From the NP-completeness of \textsc{Min-Bisection}, Bui et al.~\cite{BCLS87}
proved the NP-completeness of \textsc{Min-Bisection} restricted to 3-regular
graphs (as a part of a more general result, see~\cite[proof of Theorem
2]{BCLS87}). We use their result to prove the NP-completeness of
\textsc{Large-Deg-Max-Cut}.


\vskip\fboxsep
\vskip\fboxsep
\noindent
\framebox[\hsize][l]{\begin{minipage}{\sirka}
\textsc{Large-Deg-Max-Cut}\\
 \textbf{Input:} A graph $G$ with $2n$ vertices such that the minimum vertex
degree of $G$ is $2n-4$ and the complement of $G$ is connected and does not contain triangles; an integer $j$\\
 \textbf{Question:} Does there exist a cut $V_1$ of $G$ with at least $j$ cut-edges?
 \end{minipage}}
\vskip\fboxsep
\vskip\fboxsep

\begin{lemma}\label{lemma:bisection}
Let $G$ be a connected 3-regular graph on $2n$ vertices. Let $b$ be the size of the minimum bisection
in $G$ and let $c$ be the size of the maximum cut in~$\overline{G}$. Then $b =
n^2 - c$.
\end{lemma}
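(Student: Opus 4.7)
The plan rests on the identity that for any $S \subseteq V(G)$ with $|S| = s$, the $s(2n-s)$ pairs of vertices crossing the partition $(S, V(G) \setminus S)$ are partitioned between edges of $G$ and of $\overline{G}$, so $e_G(S) + e_{\overline{G}}(S) = s(2n-s)$, where $e_G(S)$ denotes the size of $\cut(S)$ in $G$ and analogously for $\overline{G}$. Specialised to bisections this becomes $e_G(S) + e_{\overline{G}}(S) = n^2$, and hence $b = n^2 - c'$, where $c'$ is the maximum size of a \emph{balanced} cut in $\overline{G}$. Since every bisection is a cut, $c' \leq c$, giving $b \geq n^2 - c$ immediately.

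For the matching upper bound I would show that the maximum cut of $\overline{G}$ is in fact attained by a balanced cut, i.e., $c = c'$. Fix an arbitrary cut $(A, V(G) \setminus A)$ of $\overline{G}$ and, swapping sides if necessary, assume $a := |A| \leq n$; I then grow $A$ to a bisection $A^*$ by moving $n - a$ vertices from $V(G) \setminus A$ into $A$ one at a time. Since $G$ is $3$-regular, moving a vertex $v$ into the current set $A_i$ changes $e_G$ by $3 - 2 d_{A_i}(v)$, where $d_{A_i}(v)$ counts the $G$-neighbours of $v$ already in $A_i$. Summing over the $n - a$ moves and letting $W$ denote the set of added vertices, the total change in $e_G$ equals $3(n-a) - 2 m_G(A, W) - 2 m_G(W)$, where $m_G(A, W)$ and $m_G(W)$ count the $G$-edges between $A$ and $W$ and inside $W$, respectively. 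Substituting into the identity yields
\begin{displaymath}
e_{\overline{G}}(A^*) - e_{\overline{G}}(A) = (n-a)(n-a-3) + 2 m_G(A, W) + 2 m_G(W).
\end{displaymath}

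The right-hand side is manifestly nonnegative when $n - a = 0$ or $n - a \geq 3$, for any choice of $W$. The main obstacle is the range $n - a \in \{1, 2\}$, where $(n-a)(n-a-3) = -2$ and I need $m_G(A, W) + m_G(W) \geq 1$. Here I invoke the connectedness of $G$: assuming $A \neq \emptyset$ (otherwise $e_{\overline{G}}(A) = 0$ and the claim is trivial), some vertex of $V(G) \setminus A$ has a $G$-neighbour in $A$, and including it in $W$ guarantees $m_G(A, W) \geq 1$. This yields $e_{\overline{G}}(A^*) \geq e_{\overline{G}}(A)$, hence $c = c'$, and therefore $b = n^2 - c$.
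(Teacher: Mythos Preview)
Your proof is correct and follows essentially the same approach as the paper: both arguments establish that the maximum cut of $\overline{G}$ is attained by a bisection, by moving vertices across an unbalanced cut and using the $3$-regularity and connectedness of $G$ to show the cut size does not decrease. The paper's execution is marginally slicker --- it takes a maximum cut of $\overline{G}$ with sides as close as possible, moves a single well-chosen vertex, and derives a contradiction --- whereas you move a batch of $n-a$ vertices and compute the total change directly; but the substance is the same.
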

\begin{proof}
Let $S_1$, $S_2$ be a minimum bisection in $G$ and let $b$ be the size of the bisection.
In $\overline{G}$, the partition $S_1$, $S_2$ yields a cut with $n^2 - b$ cut-edges.
Hence, $c \geq n^2 - b$.

On the other hand, let $V_1$, $V_2$ be a maximum cut in $\overline{G}$ for which the sizes of $V_1$ and $V_2$ are
as close as possible. If $|V_1| = |V_2| = n$, the partition $|V_1|$, $|V_2|$ gives
a bisection in $G$ of size $n^2 - c$, hence $b \leq n^2 - c$ and we are done.

Otherwise, assume that $|V_1| = n - k$ and $|V_2| = n+k$ for a $k \geq 1$. As the graph $G$ is connected, there is a vertex
$v$ in $V_2$ that has at least one neighbor in $V_1$. We set $V_1' = V_1 \cup \{v\}$ and $V_2' = V_2 \setminus \{v\}$.

The vertex $v$ has at least one neighbor in $V_1$. Hence, in $G$, there is at least one edge between
$v$ and $V_1$, and in $\overline{G}$, there are at most $n-k-1$ cut-edges adjacent to $v$.
 
Further, $v$ has at most two neighbors in $V_2$ and at least $n+k-3$ non-neighbors in $V_2$.
Hence, in the partition  $V_1'$, $V_2'$, there will be at least $n + k - 3$ cut-edges adjacent to $v$.
Cut edges that are not adjacent to $v$ are the same in $V_1$, $V_2$ as in $V_1'$, $V_2'$.

Altogether, $|\cut(V_1')| - |\cut(V_1)| \geq n+ k - 3 - (n-k-1) \geq 2k-2 \geq 0$.
Hence, the partition $V_1'$, $V_2'$ has smaller difference of the sizes of the two parts while the size of the
cut is not smaller, which is a contradiction with the choice of $V_1$,
$V_2$.\qed
\end{proof}


\begin{theorem}\label{thm:large-deg-max-cut}
\textsc{Large-Deg-Max-Cut} is NP-complete.
\end{theorem}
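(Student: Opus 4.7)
The plan is to reduce from \textsc{Min-Bisection} restricted to connected, triangle-free, 3-regular graphs, using Lemma~\ref{lemma:bisection} as the bridge between min-bisection in a cubic graph and max-cut in its complement. The preliminary step is therefore to strengthen the NP-completeness result of Bui et al.\ so that it applies to cubic graphs without triangles.

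To do so, I would take a connected cubic graph $H_0$ produced by Bui et al.'s reduction and replace every vertex by a small bipartite cubic \emph{vertex gadget} equipped with three degree-two ports (for instance, the graph $K_{3,3}$ with a perfect matching deleted, which has exactly three vertices of degree two). For each edge of $H_0$ the two corresponding ports are then joined by an edge. The resulting graph $H$ is 3-regular and connected by construction, and it is triangle-free because every gadget is bipartite internally, while across gadget boundaries any purported triangle using three different gadgets would force a triangle on the corresponding vertices of $H_0$, which can be precluded by a single preliminary subdivision of the edges of $H_0$. A routine accounting of how a bisection in $H$ decomposes along the gadgets shows that the min-bisection size of $H$ is an explicit linear function of that of $H_0$, which yields a polynomial-time many-one reduction.

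Once \textsc{Min-Bisection} is known to be NP-hard on connected, triangle-free cubic graphs, the rest is direct. Given such an instance $(H, b)$ with $|V(H)| = 2n$, let $G := \overline{H}$ and $j := n^2 - b$. Then $G$ has $2n$ vertices, minimum degree exactly $2n-4$, and $\overline{G} = H$ is connected and triangle-free, so $(G, j)$ is a legal input to \textsc{Large-Deg-Max-Cut}. Lemma~\ref{lemma:bisection} gives that the minimum bisection of $H$ has size $n^2$ minus the maximum cut of $G$, so $H$ admits a bisection of size at most $b$ if and only if $G$ admits a cut of size at least $j$. The transformation runs in polynomial time and membership in NP is immediate. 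The main obstacle is the first step: designing the vertex gadget so that triangles are eliminated both internally and across gadget boundaries, 3-regularity and connectedness are preserved, and the min-bisection value is tracked precisely.
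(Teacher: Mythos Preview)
Your second paragraph---passing to the complement of a connected triangle-free cubic graph and invoking Lemma~\ref{lemma:bisection}---is exactly the paper's argument. The divergence is in how you obtain NP-hardness of \textsc{Min-Bisection} on connected triangle-free cubic graphs. The paper does no extra work here: it simply observes that the cubic graph $G^*$ produced by Bui et al.'s reduction is already triangle-free (and connected whenever the input is), both facts being immediate from the form of their gadgets. So the entire first paragraph of your plan is unnecessary.

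As written, that first paragraph also has real gaps. The concrete gadget you propose is wrong: $K_{3,3}$ with a perfect matching deleted is the $6$-cycle $C_6$, in which \emph{every} vertex has degree~$2$, not a graph with exactly three degree-$2$ ports. More importantly, the claim that ``a routine accounting'' shows the min-bisection size of $H$ is an explicit linear function of that of $H_0$ is precisely the non-trivial content of any vertex-gadget reduction for bisection. One must design the gadget so that in every optimal bisection each copy lies entirely on one side (or splits in a single canonical way), and then prove this; this is essentially what Bui et al.\ do, and it is not routine. Your suggested preliminary subdivision step would also destroy $3$-regularity, creating degree-$2$ vertices that your gadget scheme does not handle. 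In short: keep your second paragraph, drop the first, and cite the triangle-freeness of Bui et al.'s output directly.
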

\begin{proof}
Let $(G,b)$ be an instance of \textsc{Connected-Min-Bisection}. We use the construction
of Bui et al.~\cite[proof of Theorem 2]{BCLS87}. Their first step is to construct from
an instance $(G,b)$ of \textsc{Min-Bisection} a 3-regular graph
$G^*$ such that $G$ has a minimum bisection of size $b$ if and only if $G^*$ has
a minimum bisection of size $b$. Further, it is immediate from their construction that $G^*$
contains no triangles, and if $G$ is connected, then $G^*$ is connected as well.
Moreover, $G^*$ has an even number of vertices.

We see that $\overline{G^*}$ fulfills the conditions of an instance of
\textsc{Large-Deg-Max-Cut}. By Lemma~\ref{lemma:bisection} we know that $G^*$
has a minimum bisection of size $b$ if and only if $\overline{G^*}$ has a maximum
cut of size $m^2 - b$.

Altogether, $G$ has a minimum bisection of size $b$ if and only if
$\overline{G^*}$ has a maximum cut of size $m^2 - b$. Hence, $(\overline{G^*},
m^2 - b)$ is an equivalent instance of \textsc{Large-Deg-Max-Cut}.
To finish the proof that \textsc{Large-Deg-Max-Cut} is NP-complete, we observe that
\textsc{Large-Deg-Max-Cut} is in NP.
\maybeqed\end{proof}

\section{Switching of Graphs with Bounded Density}

The \emph{density} of a graph $G$ is defined as
\begin{displaymath}
D(G) = \frac{|E(G)|}{\binom{|V(G)|}{2}} =
\frac{2|E(G)|}{|V(G)|(|V(G)|-1)}.
\end{displaymath}

In connection with properties of simplicial complexes, Matou\v sek and Wagner~\cite{MW14}
asked if deciding switching-minimality was easy for graphs of bounded density. We give a partial negative answer
by proving that the problem \textsc{Switch-Few-Edges} stays NP-complete 
even for graphs of density bounded by an arbitrarily small constant. This is in
contrast with Proposition~\ref{prop:maxdeg}, which shows that any graph $G$ with
maximum degree at most $|V(G)|/4$ is switching-minimal. The core of our 
argument is the following Proposition.

\begin{proposition}\label{proposition:graphbc}
Let $G$ be a graph, let $k$ be an integer, and let $c$ be a fixed constant in
$(0,1)$. In polynomial time, we can find a graph $G'$ and an integer $k'$ such
that
\begin{enumerate}
\item\label{cond:dens} $D(G') \leq c$,
\item\label{cond:sw}   $G'$ is $(\leq k')$-switchable if and only if $G$ is $(\leq k)$-switchable,
\item\label{cond:min}   $G'$ is switching-minimal if and only if $G$ is
switching-minimal, and
\item\label{cond:size} $|V(G')| = O({|V(G)|})$.
\end{enumerate}
\end{proposition}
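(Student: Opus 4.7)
The plan is to build $G'$ by attaching to $G$ two new independent sets of the same size, one fully joined to $V(G)$ and one completely disjoint from everything else. Fix an even integer $N$ (to be chosen at the end in terms of $n=|V(G)|$ and $c$), let $I_0$ and $I_1$ be two disjoint new independent sets of size $N/2$, and define
$$V(G') = V(G) \cup I_0 \cup I_1, \qquad E(G') = E(G) \cup \{vw : v \in V(G),\, w \in I_0\},$$
so that $|V(G')| = n + N$ and $|E(G')| = |E(G)| + nN/2$. I would set $k' = k + nN/2$.

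The heart of the argument is an identity for the effect of an arbitrary switch. For $A \subseteq V(G')$ let $a = |A \cap V(G)|$, $b_0 = |A \cap I_0|$, $b_1 = |A \cap I_1|$, and $b = b_0 + b_1$. Summing the contributions of the six types of pairs (within $V(G)$, within $I_0$, within $I_1$, across $I_0$--$I_1$, across $V(G)$--$I_0$, and across $V(G)$--$I_1$), a routine computation gives
$$|E(S(G',A))| = |E(S(G, A \cap V(G)))| + \tfrac{nN}{2} + b(N-b) + (2a-n)(b_0 - b_1).$$

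The next step is to show, assuming $N \geq 2n$, that the ``penalty'' $b(N-b) + (2a-n)(b_0-b_1)$ is always nonnegative, with value $0$ at $b_0 = b_1 = 0$. For fixed $b$ the minimum of $(2a-n)(b_0-b_1)$ subject to $b_0+b_1 = b$ and $0 \leq b_0, b_1 \leq N/2$ equals $-|2a-n|\min(b, N-b)$. Using the symmetry $A \leftrightarrow V(G') \setminus A$ (which fixes $|E(S(G',A))|$ and sends $b$ to $N-b$) I may assume $b \leq N/2$; the penalty then collapses to $b(N-b-|2a-n|)$, which is nonnegative since $N - b \geq N/2 \geq n \geq |2a-n|$. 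Consequently
$$\min_{A \subseteq V(G')} |E(S(G',A))| = \tfrac{nN}{2} + \min_{A_G \subseteq V(G)} |E(S(G, A_G))|.$$
Subtracting $nN/2$ from both sides makes condition~(\ref{cond:sw}) (with $k' = k + nN/2$) and condition~(\ref{cond:min}) immediate, since $|E(G')| = |E(G)| + nN/2$.

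Finally, for the density bound I use $|E(G)| \leq \binom{n}{2}$ to get $|E(G')| \leq n(n+N-1)/2$ and hence $D(G') \leq n/(n+N)$. Choosing $N$ to be an even integer with $N \geq \max\bigl(2n,\, \lceil n(1-c)/c\rceil\bigr)$ forces $D(G') \leq c$ while keeping $N = O(n)$ for fixed $c$, whence $|V(G')| = n + N = O(n)$ and conditions~(\ref{cond:dens}) and~(\ref{cond:size}) hold. The main obstacle is the identity together with the non-negativity claim: the construction is set up precisely so that the $nN/2$ term dominates the ``penalty'' for any non-trivial switch on $I_0 \cup I_1$, effectively forcing the optimal switches of $G'$ to fix $I_0 \cup I_1$ and so reduce to optimal switches of $G$ offset by a constant $nN/2$.
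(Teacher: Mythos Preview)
Your construction is the same as the paper's (two new independent sets of equal size, one fully joined to $V(G)$ and one isolated), and your key lemma --- that restricting any switch set to $V(G)$ can only decrease the edge count of the switched graph --- is exactly the paper's Claim. The only difference is bookkeeping: you derive the closed-form identity
\[
|E(S(G',A))| = |E(S(G,A\cap V(G)))| + \tfrac{nN}{2} + b(N-b) + (2a-n)(b_0-b_1)
\]
and bound the penalty term directly, whereas the paper argues by estimating $|E(G'_2)\setminus E(G'_1)|\ge |A''|N$ against $|E(G'_1)\setminus E(G'_2)|\le |A''|n$. Both arguments need the same size hypothesis ($N/2\ge n$ in your parametrisation, $N\ge n$ in the paper's), and both yield the same minimum-shift conclusion; your exact formula makes the equality case $b_0=b_1=0$ slightly more transparent, but the content is the same.
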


\begin{proof}
Let $n = |V(G)|$ and let
$N=\max\left\{n,\left\lceil\frac{3n}{4c}\right\rceil\right\}$. We construct the 
graph $G'$ in the following way (see also Fig.~\ref{figure:graphbc-construct}).
Let $V=V(G)$. Then 
\begin{displaymath}
V(G') = V \cup Y \cup Z,
\end{displaymath}
where $Y$ is a set of $N$ vertices and $Z$ is a set of $N$ more vertices, and
\begin{displaymath}
E(G') = \{ \{v_1,v_2\}\colon v_1 \in Y, v_2 \in V \} \cup E(G).
\end{displaymath}

\begin{figure}
\centering
\includegraphics{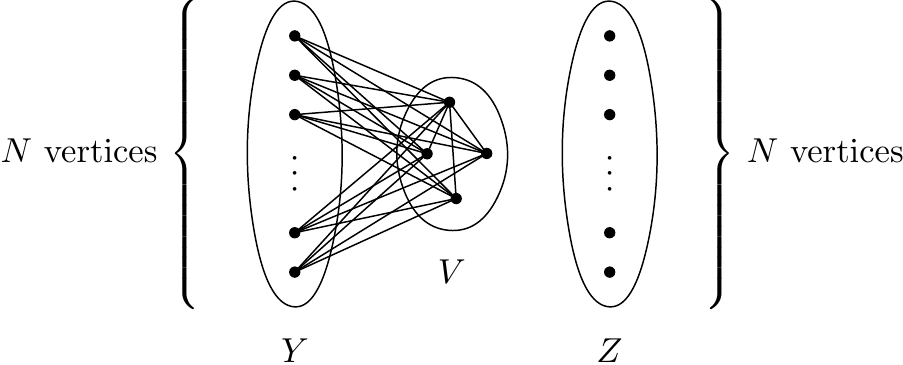}
\caption{The graph $G'$.\label{figure:graphbc-construct}}
\end{figure}

We prove that $G'$ fulfills the conditions of
Proposition~\ref{proposition:graphbc}. It is easy to see
that Condition~\ref{cond:size} holds and that $G'$ can be obtained in
polynomial time.
We prove that Conditions~\ref{cond:sw} and~\ref{cond:min} hold, too.

Assume that $G$ is switching-reducible, i.~e., there exists a set $A\subseteq V$
such that $S(G,A)$ contains fewer edges than~$G$. Let us count the number of
edges in $S(G',A)$.

It is easy to see that if we switch a subset of $V$ in $G'$, the number of
edges whose one endpoint is outside $V$ is unchanged, and the number of edges
with both endpoints outside $V$ remains zero.
We also observe that $S(G',A)[V]$ (the induced subgraph of $S(G',A)$ on the vertex subset $V$) is equal to $S(G,A)$. Hence, $S(G',A)$ has fewer
edges than $G'$, showing that $G'$ is switching-reducible.

Moreover, if $S(G,A)$ has $l$ edges for an integer $l$, then $S(G',A)$ has $l + nN$ edges. Thus, if
$G$ is $(\leq k)$-switchable, we have that $G'$ is $(\leq k+nN)$-switchable.

Now assume that $G'$ is switching-reducible, i.~e., there exists a set
$A\subseteq V(G')$ such that $S(G',A)$ has fewer edges than $G'$. If
$A\subseteq V$, we have that $S(G,A)$ has fewer edges than $G$, and Condition
\ref{cond:min} is satisfied. On the other hand, if $A \not\subseteq V$, we use 
the following Claim.

\begin{claimm}\label{claim:dobrehran}
Let $A$ be a subset of $V(G')$ and let $A' = A\cap V$. Then the number of edges in $S(G',A')$ is less than or equal to
the number of edges in $S(G',A)$.
\end{claimm}
To prove the claim, we fix a set $A\subseteq V(G')$. We may assume that $|A\cap
(Y\cup Z)| \le |Y\cup Z|/2=N$, otherwise we replace $A$ by its complement
$\overline{A}=V(G')\setminus A$ (note that $S(G',A\cap V)$ has the same number
of edges as $S(G',\overline{A}\cap V)$).

Define the sets $A'=A\cap V$ and $A''=A\setminus A' =A \cap(Y\cup Z)$. Let
$G'_1= S(G',A')$ and $G'_2=S(G',A)$. Note that $G'_2=S(G'_1,A'')$. To prove the
claim, we need to show that $G'_1$ has at most as many edges as $G'_2$.

In $G'_2$, every vertex of $A''$ is adjacent to every vertex of
$(Y\cup Z)\setminus A''$, whereas no such pair is adjacent in $G'_1$. This
means that $|E(G'_2)\setminus E(G'_1)|\ge |A''|(|Y|+|Z|-|A''|) \ge |A''|N$,
where we used the fact that $A''$ has size at most $N$.

On the other hand, an edge belonging to $G'_1$ but not to $G'_2$ must
necessarily connect a vertex from $A''$ with a vertex from $V$. Therefore,
$|E(G'_1)\setminus E(G'_2)|\le |A''|n$. Combining these estimates, we get
\begin{align*}
 |E(G'_2)| - |E(G'_1)| &= |E(G'_2)\setminus E(G'_1)| - |E(G'_1)\setminus
E(G'_2)|\\
&\ge |A''|N - |A''|n\\
&\ge 0.
\end{align*}
This proves the claim.
As a consequence of Claim~\ref{claim:dobrehran}, if $G'$ is switching-reducible,
then it can be reduced by switching a set $A'\subseteq V$. The same set $A'$
then reduces $G$, and Condition~\ref{cond:min} of the Proposition holds.
Analogically, if $G'$ can be switched to contain $L$ edges for an integer $L$,
then $G$ can be switched to contain $L - nN$ edges. Hence, we have proved Condition~\ref{cond:sw}
with $k' = k + nN$.

It remains to check Condition~\ref{cond:dens}. By definition, the density of
$G'$ is

\begin{align*}
D(G') &= \frac{ 2|E(G')|}{(2N+n)(2N+n-1)}\\ &\le \frac{2\left(\binom{n}{2} + nN\right)}{(2N+n)(2N+n-1)}\\
&\le \frac{n^2+ 2nN}{4N^2}\\ &\le \frac{3nN}{4N^2}= \frac{3n}{4N}\\ &\le c.
\end{align*}
This completes the proof.
%
%
%
%
%
\maybeqed\end{proof} 

Proposition~\ref{proposition:graphbc} allows us to state a stronger version of Theorem~\ref{thm:sw-few-npc}
for the special case of graphs with bounded density.

\begin{theorem}\label{theorem:npc-c}
For every $c>0$, the problem \textsc{Switch-Few-Edges} is NP-complete for graphs
of density at most $c$.
\end{theorem}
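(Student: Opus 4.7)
The plan is to prove Theorem~\ref{theorem:npc-c} by a direct reduction from the general (unrestricted) version of \textsc{Switch-Few-Edges}, which was shown to be NP-complete in Theorem~\ref{thm:sw-few-npc}. Fix an arbitrary constant $c>0$. Membership in NP is immediate: a certificate is simply a subset $A\subseteq V(G')$, and one can check in polynomial time whether $|E(S(G',A))|\le k'$. So the content of the theorem is the hardness part.

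For hardness, I would take an arbitrary instance $(G,k)$ of \textsc{Switch-Few-Edges}. By Proposition~\ref{proposition:graphbc}, applied to $G$, $k$, and the given constant $c$, in polynomial time we obtain a graph $G'$ and an integer $k'$ such that $D(G')\le c$ (Condition~\ref{cond:dens}), $G'$ is $(\le k')$-switchable iff $G$ is $(\le k)$-switchable (Condition~\ref{cond:sw}), and $|V(G')|=O(|V(G)|)$ (Condition~\ref{cond:size}). The last condition guarantees that $k'=k+nN$ with $N=O(n)$, so the output instance has size polynomial in the input size, and $k'$ has bit-length polynomial in the input. Thus the map $(G,k)\mapsto(G',k')$ is a polynomial-time many-one reduction from \textsc{Switch-Few-Edges} to the restriction of \textsc{Switch-Few-Edges} to graphs of density at most $c$. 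Combined with NP-completeness of the unrestricted problem (Theorem~\ref{thm:sw-few-npc}), this yields NP-completeness of the restricted problem, establishing the theorem.

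There is essentially no obstacle to this argument once Proposition~\ref{proposition:graphbc} is in hand: all the density-reducing work has been absorbed into that proposition, and we only need to check that the parameters of the reduction are well-behaved. The one routine point worth noting explicitly is that $c$ is treated as a fixed constant (independent of the input), so the definition $N=\max\{n,\lceil 3n/(4c)\rceil\}$ gives $N=O(n)$ with a constant hidden by $c$; hence both the vertex count $2N+n$ and the edge threshold $k'=k+nN$ are polynomially bounded in $|V(G)|$ and $\log k$. The proof of Theorem~\ref{theorem:npc-c} then consists only of invoking Theorem~\ref{thm:sw-few-npc} and Proposition~\ref{proposition:graphbc} and observing that the composition of the two reductions is polynomial.
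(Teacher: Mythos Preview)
Your proposal is correct and follows essentially the same approach as the paper: invoke Proposition~\ref{proposition:graphbc} to transform an arbitrary instance $(G,k)$ into an equivalent instance $(G',k')$ of density at most $c$, then appeal to Theorem~\ref{thm:sw-few-npc} for hardness of the unrestricted problem. You add a few routine details (explicit NP membership, polynomial bounds on $N$ and $k'$) that the paper leaves implicit, but the argument is the same.
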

\begin{proof}
As shown by Proposition~\ref{proposition:graphbc}, a general instance $(G,k)$ of
\textsc{Switch-Few-Edges} can be transformed into an equivalent instance $(G',k')$ of
density at most~$c$. Since \textsc{Switch-Few-Edges} is NP-complete on general
instances by Theorem~\ref{thm:sw-few-npc}, it remains NP-complete on instances
of density at most~$c$.
\maybeqed\end{proof}


\section{Concluding Remarks}

{\bf 5.1.}
We have been trying to prove that the problem \textsc{Switch-Reducible} is NP-complete (and hence,
\textsc{Switch-Minimal} is co-NP-complete). We have not yet succeeded. However,
if it is true, then Proposition~\ref{proposition:graphbc} gives the following analogue of Theorem~\ref{theorem:npc-c}
even for these problems.

\begin{proposition}\label{theorem:swred-npc-c}
If the problem \textsc{Switch-Reducible} is NP-complete, then for every $c>0$, 
the problem \textsc{Switch-Reducible} is NP-complete for graphs
of density at most $c$, and the problem \textsc{Switch-Minimal} is
co-NP-complete for graphs of density at most $c$.
\end{proposition}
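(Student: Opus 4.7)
The plan is to derive Proposition~\ref{theorem:swred-npc-c} as a direct corollary of Proposition~\ref{proposition:graphbc}, essentially by reusing the same graph transformation $G \mapsto G'$ that was used to prove Theorem~\ref{theorem:npc-c}, but now invoking Condition~\ref{cond:min} (preservation of switching-minimality) rather than Condition~\ref{cond:sw} (preservation of $(\le k)$-switchability).

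First I would assume the hypothesis: \textsc{Switch-Reducible} is NP-complete on general input graphs. Given an arbitrary instance $G$ of \textsc{Switch-Reducible} and a fixed constant $c \in (0,1)$, I would apply Proposition~\ref{proposition:graphbc} with this $c$ to obtain in polynomial time a graph $G'$ satisfying Conditions~\ref{cond:dens}--\ref{cond:size}. By Condition~\ref{cond:size}, $|V(G')| = O(|V(G)|)$, so the construction is indeed a polynomial-time reduction. By Condition~\ref{cond:dens}, $D(G') \le c$. By Condition~\ref{cond:min}, $G'$ is switching-minimal if and only if $G$ is switching-minimal, equivalently $G'$ is switching-reducible iff $G$ is switching-reducible. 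Hence $G \mapsto G'$ is a polynomial-time many-one reduction from general \textsc{Switch-Reducible} to \textsc{Switch-Reducible} restricted to graphs of density at most $c$. Combined with the NP-completeness assumption and the fact that \textsc{Switch-Reducible} is obviously in NP (guess a subset $A$, count edges of $S(G,A)$, compare with $|E(G)|$), this yields NP-completeness of the restricted version.

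For the second claim, \textsc{Switch-Minimal} is exactly the complement of \textsc{Switch-Reducible} (on the same input), and the restriction to a graph class is preserved under complementation of the decision problem. Therefore NP-completeness of \textsc{Switch-Reducible} on graphs of density at most $c$ immediately gives co-NP-completeness of \textsc{Switch-Minimal} on the same class.

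I do not anticipate any real obstacle here: all of the heavy lifting, namely the construction of $G'$ and the verification that switching-minimality is preserved (Claim~\ref{claim:dobrehran} in the proof of Proposition~\ref{proposition:graphbc}), has already been done. The only point worth emphasizing in writing is that the same transformation simultaneously witnesses both parts of the statement, so no separate reduction is needed for \textsc{Switch-Minimal}.
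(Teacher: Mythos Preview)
Your proposal is correct and matches the paper's own treatment: the paper does not write out a separate proof but simply states that Proposition~\ref{proposition:graphbc} yields this analogue of Theorem~\ref{theorem:npc-c}, which is exactly the argument you spell out via Condition~\ref{cond:min}. There is nothing to add; the heavy lifting is indeed all contained in Proposition~\ref{proposition:graphbc} and Claim~\ref{claim:dobrehran}.
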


\medskip\noindent
{\bf 5.2.} Lindzey~\cite{Lindzey14} 
noticed that it is possible to speed-up several graph algorithms using
switching to a lower number of edges -- he obtained up to super-polylogarithmic
speed-ups of algorithms for diameter, transitive closure, bipartite maximum
matching and general maximum matching. However, he focuses on switching digraphs
(with a definition somewhat 
different to Seidel's switching in undirected graphs), 
 where the situation is in
sharp contrast with our results -- a digraph with the minimum number of edges in
its switching-class can be found in $O(n+m)$ time.

\medskip\noindent
{\bf 5.3.} It has been observed before (cf. e.g. \cite{EHHR00-issues}) that for a graph property $\cal P$, the complexity of deciding $\cal P$ is independent on the complexity of deciding if an input graph can be switched to a graph possessing the property $\cal P$. Switching to few edges thus adds another example of a polynomially decidable property (counting the edges is easy) whose switching version is hard. Previously known cases are the NP-hardness of  deciding switching-equivalence to a regular graph
\cite{Kratochvil03} and deciding switching-equivalence to an $H$-free graph for certain specific graphs $H$~\cite{JK14-hfree}.

\medskip\noindent
{\bf 5.4.}
Let $d>0$ be a constant. What can we say about the complexity of
\textsc{Switch-Reducible} and \textsc{Switch-Few-Edges} on graphs of maximum
degree at most~$dn$? If $d\le\frac 1 4$, the two problems are trivial by
Proposition~\ref{prop:maxdeg}. On the other hand, for $d\ge \frac 1 2$ the
restriction on maximum degree becomes irrelevant, since any switching-minimal
graph has maximum degree at most $\frac n 2$ by Proposition~\ref{pro:pul}. For
any $d\in(\frac 1 4,\frac 1 2)$, the complexity of the two problems on instances
of maximum degree at most $dn$ is open.



\bibliographystyle{plain}

\end{document}